\newcommand{\V}[1]{\operatorname{vec}\left(#1\right)}
\newcommand{\norm}[1]{\left\lVert#1\right\rVert}
\theoremstyle{plain}
\newtheorem{corollary}{Corollary}
\newtheorem{theorem}{Theorem}
\newtheorem{definition}{Definition}
\def\1{\bm{1}}
\def\vone{{\bm{1}}}
\def\vh{{\bm{h}}}
\def\vp{{\bm{p}}}
\def\vu{{\bm{u}}}
\def\vw{{\bm{w}}}
\def\vy{{\bm{y}}}
\def\mB{{\bm{B}}}
\def\mH{{\bm{H}}}
\def\mI{{\bm{I}}}
\def\mK{{\bm{K}}}
\def\mM{{\bm{M}}}
\def\mQ{{\bm{Q}}}
\def\mR{{\bm{R}}}
\def\mS{{\bm{S}}}
\def\mU{{\bm{U}}}
\def\mV{{\bm{V}}}
\def\mW{{\bm{W}}}
\def\mX{{\bm{X}}}
\def\mY{{\bm{Y}}}
\DeclareMathAlphabet{\mathsfit}{\encodingdefault}{\sfdefault}{m}{sl}
\SetMathAlphabet{\mathsfit}{bold}{\encodingdefault}{\sfdefault}{bx}{n}
\def\gD{{\mathcal{D}}}
\def\gH{{\mathcal{H}}}
\def\gL{{\mathcal{L}}}
\def\gM{{\mathcal{M}}}
\def\gN{{\mathcal{N}}}
\def\gS{{\mathcal{S}}}
\newcommand{\E}{\mathbb{E}}
\newcommand{\R}{\mathbb{R}}
\title{Differential Privacy of Network Parameters from a System Identification Perspective}
\begin{document}
%
\maketitle
\begin{abstract}
This paper addresses the problem of protecting network information from privacy system identification (SI) attacks when sharing cyber-physical system simulations. We model analyst observations of networked states as time-series outputs of a graph filter driven by differentially private (DP) nodal excitations, with the analyst aiming to infer the underlying graph shift operator (GSO). Unlike traditional SI, which estimates system parameters, we study the inverse problem: what assumptions prevent adversaries from identifying the GSO while preserving utility for legitimate analysis. We show that applying DP mechanisms to inputs provides formal privacy guarantees for the GSO, linking the $(\epsilon,\delta)$-DP bound to the spectral properties of the graph filter and noise covariance. More precisely, for DP Gaussian signals, the spectral characteristics of both the filter and noise covariance determine the privacy bound, with smooth filters and low-condition-number covariance yielding greater privacy.

\end{abstract}
\begin{keywords}
Graph Filters, Graph Signal Processing (GSP), Differential Privacy (DP) 
\end{keywords}
\section{Introduction}
\label{sec:intro}
One classic problem in signal processing is system identification (SI). The goal, given observations of a system driven by an excitation signal, is to estimate the system information \cite{zadeh1956identification}. Formally, we denote our observations as $\vy_t\in\R^n$, our excitations by $\vu_t\in\R^n$ and our system by $\gH(\cdot;\bm{\theta})$ where $\gH(\cdot;\bm{\theta})$ is a linear shift invariant (LSI) system with parametrization $\bm{\theta}\in\R^d$. Using this notation we have the following input-output model 
\begin{align}
    \vy_t=\gH(\vu_t;\bm{\theta}). \label{eq:gen_system}
\end{align}
Typically, one first fixes a model class for $\gH(\cdot;\bm{\theta})$ and then applies known excitations in order to estimate the parameter vector $\bm{\theta}$ that best explains the input–output observations \cite{ljung1987theory}. For example in Ho and Kalman's seminal work \cite{ho1966effective}, they determine the impulse response of the deterministic system to then algebraically reconstruct a minimal state-space model by constructing and factoring a Hankel matrix (block matrix of the inputs and outputs). This approach was later extended to handle stochasticity in both N4SID\cite{van1994n4sid} and MOESP\cite{verhaegen1994identification}. While the literature on SI is quite mature and developed, we are specifically interested in systems comprised of graph filters. 
\subsection{Network Identification from Graph Signals}
A graph filter is a LSI map where the invariance is with respect to the graph shift operator (GSO), which is typically represented by the graph Laplacian or adjacency matrix. For example, the order $K$ polynomial graph filter, which is popular in graph convolutional neural networks as the Chebyshev filter \cite{hammond2011wavelets,defferrard2016convolutional},\footnote{The Chebyshev filter is a special type of polynomial filters.} is given by   
\begin{align}
    \mH(\mS;\vh):=\sum_{k=0}^K\vh_k\mS^k.
\end{align} In fact, thinking of graph filters as polynomials of the GSO is particularly useful since, by applying the Cayley-Hamilton theorem \cite{gantmakher2000theory}, all graph filters can be represented by a sufficiently high, possibly infinte, degree polynomial of the GSO. Using this graph filter formulation, \eqref{eq:gen_system} reduces to, where $\bm{\theta}=[\mS,\vh]$,
\begin{align}
    \vy_t= \mH(\mS;\vh)\vu_t. \label{eq:gf_setup}
\end{align} Two classic examples of \eqref{eq:gf_setup} include the diffusion dynamics in social networks and the DC power flow model.  
More generally, the typical SI task is to find the coefficients $\vh$ given a known $\mS$ and known graph filter class. In this regime, if $\vu_t$ is known, identifying $\bm{\theta}$ reduces to a least-squares problem given the linearity.  
However, the problem is made more interesting by instead assuming $\mS$ is unknown and/or by assuming $\vu_t$ is unknown. This setting is referred to as blind SI and is considerably more challenging. In fact, if no assumptions about $\mH$, $\mS$, or $\vu_t$ are made, the problem is ill-posed \cite{segarra2017network}. 


We are most interested in the case studied by \cite{segarra2017network, pasdeloup2017characterization, shafipour2021identifying} where both $\mS$ and $\vu_t$ are unknown. Unlike prior work, our goal is to prevent an adversary from solving the identification problem rather than solving it ourselves. Suppose we want to release $\vy_t$ to an analyst so that they can perform some data analysis on the graph signals. We want the analyst to have good results, in the Mean Square Error (MSE) sense, but we do not want them to infer information about $\mS$. More precisely we are interested in the scenario when the input signals have been a priori made Differentially Private (DP). DP is mechanism that adds noise to a release of statistics in a way that can provide probabilistic guarantees that the releases of any two statistics over adjacent datasets are indistinguishable.
Formally, we say a mechanism is DP if it satisfies the following definition. 
\begin{definition}[$(\epsilon, \delta)-DP$ \cite{dwork2014algorithmic}]
    \label{def:classic_dp}
    Let $\gM$ be a randomized algorithm for publishing a query defined over subsets of dataset $\bm{\gD}$. Then $\forall \bm \mX,\bm \mX'\subset\bm{\gD}$ such that $\mX$ and $\mX'$ are adjacent, $\gM$ is $(\epsilon,\delta)-$DP if $\forall\gS\subset\operatorname{Range}(\gM)$
    \begin{align}
        \Pr\left[\gM(\mX)\in\gS\right]\leq e^\epsilon\Pr\left[\gM(\mX')\in\gS\right]+\delta. \label{eq:dp}
    \end{align}
\end{definition}
Using the notation in Definition \ref{def:classic_dp}, we assume each $\vu_t=\gM(\mX)$ is the output of the DP mechanism. A convenient property of DP is the conservation of privacy under post processing \cite{dwork2014algorithmic}. That is, any post-processing function on $\vy_t$ does not affect the privacy guarantee on the input. However, we focus on the privacy of $\mS$ rather than $\vu_t$. The central question is: without altering $\mH(\mS)$, can the DP signals $\vu_t$ protect $\mS$ and limit an adversary’s ability to identify it? 
We address this by deriving explicit conditions on the noise structure of $\vu_t$ under a Gaussian prior. In fact, to the best of our knowledge, we are the first to provide results on protecting the privacy of a GSO in a graph filter given DP inputs. Before stating the formal problem, we introduce a motivating example.

\subsubsection{Application domain}
%
Graph Signal Processing (GSP) has been broadly applied as an approximation of the systems describing many real data justified by network dynamics, as they often exhibit low-pass graph-filter characteristics \cite{LPGP_mag}. In power systems, we are interested in protecting the system and the consumption data. The consumption data is $\vu_t$, while $\vy_t$ represents the voltage angles (states). These states are generated via a graph filter with the consumption data as input. Since customer consumption data is generally considered private, a mechanism is required to protect it. While many mechanisms exist, DP\footnote{We use DP to refer to both Differential Privacy and Differentially Private.} is the only mechanism with provable guarantees against an arbitrarily powerful adversary and has already seen success on consumption data \cite{ravi2022differentially,liu2025differentiallyprivatedistributionrelease}. The question thus becomes: given $\vu_t$ has been made DP, is the privacy of the grid structure also protected when we release the states?

\section{Problem Formulation}
\label{sec:format}
Consider a graph \(\mathcal{G} = (\mathcal{V}, \mathcal{E})\) with $n$ vertices and GSO denoted by $\mS$. In this paper, we consider undirected graphs that contain no self-loops.
Let $\vy_t\in\R^n$ be a graph signal, $\mH(\mS)$ a graph filter, and $\vu_t$ an excitation. Combining these, we have the standard graph filter denoted as $\vy_t=\mH(\mS)\vu_t.$ 
Furthermore, let $\tilde{\vu}_t$ be a DP excitation with distribution given by $f_{\vu}(\cdot;\bm{\phi})$ such that $
    \tilde{\vy_t}=\mH(\mS)\tilde{\vu_t}.$
Extending this to the matrix format, let $\tilde{\mY},\tilde{\mU}\in\R^{n\times T}$ be formed by column stacking $T$ observations and excitations yielding
    $\tilde{\mY}=\mH(\mS)\tilde{\mU},$
where $\tilde{\mU}$'s probability density function (pdf) is given by $f_{\mU}(\cdot;\bm{\phi})$.  The goal is to show that given $\tilde{\mU}$ is DP, we get a DP guarantee, as in Definition \ref{def:classic_dp}, on release $\tilde{\mY}$ with respect to (w.r.t) $\mS$. Before continuing, we introduce a strictly tighter notion of DP known as probabilistic-DP (PDP) where, PDP $\implies$ DP, but DP$\centernot\implies$PDP \cite{mcclure2015relaxations}.  
\begin{definition}[$(\epsilon,\delta)-PDP$ \cite{machanavajjhala2008privacy}]
    \label{def:pdp}
    Let $\mX,\mX'\subset\mathcal{D}$ be adjacent subsets of dataset $\mathcal{D}$. Let $\mathcal{M}$ denote a query mechanism over $\mathcal{D}$ with output given by $\tilde{q}$ and where $\tilde{q}\sim f(\tilde{q}|\mX)$. We say $\mathcal{M}$ is PDP if 
    \begin{align}
        \Pr \left(\left|\ln\dfrac{f(\tilde{q}|\mX)}{f(\tilde{q}|\mX')}\right|>\epsilon\right)\leq \delta.
    \end{align}
\end{definition}
Additionally we need to define a notion of adjacency. We consider two GSO's $\mS$ and $\mS'$ to be adjacent if they differ by a single edge and
\begin{align}
    \norm{\mS-\mS'}_2\leq \Delta_{\mS}. \label{eq:adjacent}
\end{align} In the traditional DP scenario, adjacent datasets are typically defined as having a single row change with bounded norm \cite{dwork2014algorithmic}, and thus restricting a GSO to bounded single edge changes is the appropriate parallel \cite{wang2013differential,chen2021edge}. Finally, to align our scenario within the PDP framework, we let $\vp(\V{\mY}|\mS;\bm{\phi})$ refer to the pdf of $\mY$ given GSO $\mS$ and parameter vector $\bm{\phi}$ for $f_{\mU}(\cdot;\bm{\phi})$. We then define 
\begin{align}
    \gL_{\mS,\mS'}\left(\V{\mY}\right):=\ln\left(\dfrac{\vp(\V{\mY}|\mS,\bm{\phi})}{\vp(\V{\mY}|\mS',\bm{\phi})}\right).
\end{align} 
Therefore, we want to verify that 
\begin{align}
    \Pr\left[\gL_{\mS,\mS'}\left(\V{\mY}\right)>\epsilon\right]\leq \delta. \label{eq:goal}
\end{align}
In the following section, we verify \eqref{eq:goal} and find explicit conditions assuming $f_{\mU}(\cdot;\bm{\phi})$ is Gaussian. 




\section{Analysis}
\label{sec:analysis}
For exposition and the sake of analysis we assume, for now, that $\mH(\mS)$ is invertible. 
We will later show how to generalize the analytical results. 
Using the change of variable equation, and letting $\mH_{\mS}^{-1}:=\mI_T\otimes\mH^{-1}(\mS)$ we have 
\begin{align}
    &\vp\left(\V{\mY}|\mS;\bm{\phi}\right)=\dfrac{1}{|\det{\mH(\mS)}|^T}f_{\mU}(\mH_{\mS}^{-1}\V{\mY};\bm{\phi}).\nonumber 
\end{align}
This yields an explicit log-likelihood ratio of
\begin{align}
    &\gL_{\mS,\mS'}(\V{\mY}):= \nonumber\\
    &~~~T\ln\left(\left|\dfrac{\det{\mH(\mS')}}{\det{\mH(\mS)}}\right|\right)+\ln\left(\dfrac{f_{\mU}\left(\mH_{\mS}^{-1}\V{\mY};\bm{\phi}\right)}{f_{\mU}\left(\mH_{\mS'}^{-1}\V{\mY};\bm{\phi}\right)}\right).\label{eq:log-lik} 
\end{align}
We therefore have the following theorem.
\begin{theorem}
\label{th:renyi_dp}
    Let, for $\alpha>1$, $D_{\alpha}(P||Q)$ be the order $\alpha$ R\'enyi Divergence for distributions $P$ and $Q$. Then, for the log-likelihood ratio $\gL_{\mS,\mS'}(\cdot)$ defined in \eqref{eq:log-lik} and where 
    \begin{align}
        P:=\dfrac{f_{\mU}\left(\mH_{\mS}^{-1}\V{\mY};\bm{\phi}\right)}{|\det{\mH(L)}|^T}~~Q:=\dfrac{f_{\mU}\left(\mH_{\mS'}^{-1}\V{\mY};\bm{\phi}\right)}{|\det{\mH(L')}|^T}\nonumber
    \end{align}we have
    \begin{align}
        &\Pr\left(\gL_{\mS,\mS'}(\cdot)>\epsilon\right)\leq \inf_{\alpha}\exp\left\{ (\alpha-1)\left(D_{\alpha}\left(\left.P\right|\left|Q\right.\right)- \epsilon\right)\right\}.\label{eq:renyi_final}
    \end{align}
\end{theorem}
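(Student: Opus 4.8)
The plan is to read \eqref{eq:renyi_final} as a Chernoff-type tail bound and to connect the exponential moment of the log-likelihood ratio directly to the R\'enyi divergence. First I would observe that, by the change-of-variable identity stated above, $P$ and $Q$ are exactly the densities $\vp(\V{\mY}|\mS;\bm{\phi})$ and $\vp(\V{\mY}|\mS';\bm{\phi})$ evaluated at the same point, so that $\gL_{\mS,\mS'}(\V{\mY}) = \ln(P/Q)$ and the probability in \eqref{eq:renyi_final} is taken with respect to $P$, the law of $\V{\mY}$ under the true GSO $\mS$.

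The core step is an exponential Markov inequality. For any fixed $\alpha > 1$, set $s := \alpha - 1 > 0$, so that $t \mapsto e^{st}$ is increasing and
\[
\Pr\left(\gL_{\mS,\mS'}(\V{\mY}) > \epsilon\right) = \Pr\left(e^{s\gL_{\mS,\mS'}} > e^{s\epsilon}\right) \leq e^{-s\epsilon}\,\E_P\!\left[e^{s\gL_{\mS,\mS'}}\right].
\]
Next I would rewrite the moment on the right using $\gL_{\mS,\mS'} = \ln(P/Q)$, which gives $\E_P\!\left[(P/Q)^{s}\right] = \E_P\!\left[(P/Q)^{\alpha-1}\right]$.

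The final step identifies this moment with the R\'enyi divergence: by definition $D_\alpha(P\|Q) = \frac{1}{\alpha-1}\ln \E_P\!\left[(P/Q)^{\alpha-1}\right]$, whence $\E_P\!\left[(P/Q)^{\alpha-1}\right] = \exp\{(\alpha-1)D_\alpha(P\|Q)\}$. Substituting back yields, for each $\alpha > 1$, the bound $\exp\{(\alpha-1)(D_\alpha(P\|Q) - \epsilon)\}$, and taking the infimum over $\alpha$ delivers \eqref{eq:renyi_final}.

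I do not anticipate a deep obstacle here, as the argument is a clean application of Markov's inequality together with the definition of R\'enyi divergence. The two care points are (i) ensuring $s = \alpha - 1 > 0$ so the exponential is monotone and Markov applies in the correct direction, which is precisely why the statement restricts to $\alpha > 1$; and (ii) being explicit that both the tail probability and the expectation are taken under $P$, since it is this choice of measure that makes $\E_P[(P/Q)^{\alpha-1}]$ coincide with the R\'enyi divergence rather than a different functional. The genuine analytic effort is deferred to the subsequent results, where $D_\alpha(P\|Q)$ must be evaluated explicitly for the Gaussian input model in terms of the filter spectrum and the noise covariance.
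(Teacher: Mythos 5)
Your proof is correct and takes essentially the same route as the paper: a Chernoff (exponential Markov) bound on the log-likelihood ratio, with the resulting moment identified as $\exp\{(\alpha-1)D_{\alpha}(P\|Q)\}$ via the definition of the R\'enyi divergence. The only cosmetic difference is the parametrization—you tilt by $s=\alpha-1$ under $P$, which produces the stated bound directly, whereas the paper writes the Chernoff exponent as $\alpha$ and implicitly re-indexes to the order-$\alpha$ divergence.
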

\begin{proof}
    Given that $D_{\alpha}(P||Q)=\dfrac{1}{\alpha-1}\log\int P^{\alpha}Q^{1-\alpha}$, we can apply the Chernoff bound to
    \eqref{eq:log-lik} yielding 
    \begin{align}
        \inf_{\alpha}e^{-\alpha\epsilon}\E\left[e^{\alpha\gL_{\mS,\mS'}(\cdot)}\right]=\inf_{\alpha>1}\exp\left\{ (\alpha-1)\left(D_{\alpha}\left(\left.P\right|\left|Q\right.\right)-\epsilon\right)\right\} \nonumber
    \end{align}
\end{proof}
Next, we introduce the following corollary when $\tilde{\mU}$ is given by a Gaussian DP mechanism. We do not assume our DP mechanism is i.i.d. In fact, returning to the power systems example, if $\mS$ represents the power grid and $\mU$ represents the active power, then $\mU$ would have both spatial and temporal correlations as of result of weather patterns and thus the DP mechanism would also require these correlations.    
\begin{corollary}
    \label{cor:gaus}
        In the case when $\V{\tilde{\mU}}\sim\gN\left(\V{\mM},\bm{\Sigma}_{Tn}\right)$, where $\bm{\Sigma}_{Tn}$ is assumed to be invertible and denote $\mH_{\mS}:=\mI_T\otimes\mH(\mS)$, then  let 
        \begin{align}
            \bm{\mu}:= \V{\mH(\mS)\mM} \quad \bm{\mu}':=\V{\mH(\mS')\mM}, \label{eq:cor_means}
        \end{align} 
        \begin{align}
            \mK'&:=\mH_{\mS'}\bm{\Sigma}_{Tn}\mH_{\mS'}^\top~~ \mK:=\mH_{\mS}\bm{\Sigma}_{Tn}\mH_{\mS}^\top \label{eq:cor_cov}
        \end{align}
        and 
        \begin{align}
            \mB_{\alpha}=(1-\alpha)(\mK')^{-1}+\alpha(\mK)^{-1},
        \end{align}        
        then \eqref{eq:renyi_final} is bounded by 
        \begin{align}
            &\inf_{\alpha>1}\left\{\dfrac{1}{2(\alpha-1)}\left[\ln\left(\dfrac{\det(\mB_{\alpha}^{-1})}{\left(\det((\mK'))\right)^{1-\alpha}\left(\det(\mK)\right)^{\alpha}}\right)\right]\right.\nonumber\\
            &~~~~~~\left.+\dfrac{\alpha}{2(\alpha-1)}(\bm{\mu}-\bm{\mu}')^\top\left(\mB_{\alpha}\right)(\bm{\mu}-\bm{\mu}')-\alpha\epsilon\right\}\nonumber\\ \label{eq:cor}
        \end{align}
\end{corollary}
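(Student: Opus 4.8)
The plan is to lean on Theorem~\ref{th:renyi_dp}: once the R\'enyi divergence $D_{\alpha}(P\|Q)$ is in closed form for the Gaussian case, substituting it into \eqref{eq:renyi_final} and taking $\inf_{\alpha>1}$ gives the corollary. So the first step is to identify $P$ and $Q$ as Gaussian densities. Since $\V{\tilde{\mY}}=\mH_{\mS}\V{\tilde{\mU}}$ with $\mH_{\mS}=\mI_T\otimes\mH(\mS)$ and $\V{\tilde{\mU}}\sim\gN(\V{\mM},\bm{\Sigma}_{Tn})$, the affine pushforward of a Gaussian is Gaussian, so $P=\vp(\cdot|\mS;\bm{\phi})=\gN(\bm{\mu},\mK)$ and $Q=\vp(\cdot|\mS';\bm{\phi})=\gN(\bm{\mu}',\mK')$ with the means and covariances of \eqref{eq:cor_means}--\eqref{eq:cor_cov}. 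I would note that the prefactors $|\det\mH(\mS)|^{-T}$ appearing in the definition of $P,Q$ are precisely the Jacobian factors that normalize these pushforwards (using $|\det\mH_{\mS}|=|\det\mH(\mS)|^{T}$ via the Kronecker structure), so the densities $P,Q$ are genuinely normalized Gaussians and no separate bookkeeping of the determinant constants is needed.

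Next I would compute $D_{\alpha}(P\|Q)=\tfrac{1}{\alpha-1}\ln\int P^{\alpha}Q^{1-\alpha}$ directly as a Gaussian integral. Writing out $P^{\alpha}Q^{1-\alpha}$, the exponent is $-\tfrac12\big[\alpha(\vx-\bm{\mu})^\top\mK^{-1}(\vx-\bm{\mu})+(1-\alpha)(\vx-\bm{\mu}')^\top(\mK')^{-1}(\vx-\bm{\mu}')\big]$, whose quadratic part is governed by the combined precision $\mB_{\alpha}=\alpha\mK^{-1}+(1-\alpha)(\mK')^{-1}$. Completing the square in $\vx$ and integrating produces a factor $\det(\mB_{\alpha})^{-1/2}$, which, together with the $\det(\mK)^{-\alpha/2}\det(\mK')^{-(1-\alpha)/2}$ coming from the two normalizing constants, yields — after dividing the logarithm by $\alpha-1$ — exactly the log-determinant term $\tfrac{1}{2(\alpha-1)}\ln\big(\det(\mB_{\alpha}^{-1})/(\det(\mK')^{1-\alpha}\det(\mK)^{\alpha})\big)$ of the statement. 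The residual left over from completing the square is a quadratic form in $\bm{\mu}-\bm{\mu}'$, and this supplies the mean-dependent term.

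I expect the main obstacle to be that mean-dependent term. Because $\mK$ and $\mK'$ do not commute in general, the residual from completing the square does not separate term-by-term, and the delicate step is collecting the cross terms into a single quadratic form in $\bm{\mu}-\bm{\mu}'$ while tracking the $\alpha$-dependent weights carefully; one must invoke algebraic identities among $\mK^{-1}$, $(\mK')^{-1}$, and $\mB_{\alpha}^{-1}$ (and verify the antisymmetric cross-contributions cancel so the resulting matrix is symmetric) to reach the compact form displayed in \eqref{eq:cor_cov}. A secondary but essential point is integrability: the Gaussian integral converges, and hence $D_{\alpha}(P\|Q)$ is finite, only when $\mB_{\alpha}=\alpha\mK^{-1}-(\alpha-1)(\mK')^{-1}\succ 0$, so the final $\inf_{\alpha>1}$ must in fact be taken over the subset of $\alpha>1$ for which this positive-definiteness holds.
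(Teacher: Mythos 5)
Your overall route is the one the paper itself (implicitly) relies on: the paper gives no separate proof of Corollary~\ref{cor:gaus}, which is meant to follow from Theorem~\ref{th:renyi_dp} by recognizing $P$ and $Q$ as the Gaussian pushforwards $\gN(\bm{\mu},\mK)$ and $\gN(\bm{\mu}',\mK')$ and substituting the closed-form Gaussian R\'enyi divergence. Your identification of the pushforwards, your accounting of the Jacobian via $|\det\mH_{\mS}|=|\det\mH(\mS)|^{T}$, your derivation of the log-determinant term (which matches \eqref{eq:cor} exactly), and your observation that the infimum must be restricted to those $\alpha>1$ for which $\mB_{\alpha}\succ 0$ are all correct; that last caveat is one the paper never states and is genuinely needed.

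The gap is precisely at the step you flagged as delicate but asserted would work out: the mean term. Completing the square with $\mA=\alpha\mK^{-1}$, $\mB=(1-\alpha)(\mK')^{-1}$ leaves the residual $(\bm{\mu}-\bm{\mu}')^\top\mA(\mA+\mB)^{-1}\mB\,(\bm{\mu}-\bm{\mu}')$, and the algebraic identity $\mA(\mA+\mB)^{-1}\mB=(\mA^{-1}+\mB^{-1})^{-1}$ collapses this weight to $\alpha(1-\alpha)\left(\alpha\mK'+(1-\alpha)\mK\right)^{-1}$, so that after dividing the log-integral by $\alpha-1$ the mean contribution to $D_{\alpha}(P\|Q)$ is
\begin{align}
\frac{\alpha}{2}\,(\bm{\mu}-\bm{\mu}')^\top\left(\alpha\mK'+(1-\alpha)\mK\right)^{-1}(\bm{\mu}-\bm{\mu}')\nonumber
\end{align}
i.e., the weight is the inverse of the covariance mixture $\mK'\mB_{\alpha}\mK$, not $\frac{1}{\alpha-1}\mB_{\alpha}$ as displayed in \eqref{eq:cor}. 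No identity among $\mK^{-1}$, $(\mK')^{-1}$, $\mB_{\alpha}^{-1}$ can bridge the two, because they are genuinely different functions of $(\mK,\mK')$: already when $\mK=\mK'$ (so $\mB_{\alpha}=\mK^{-1}$) the exact term is $\frac{\alpha}{2}(\bm{\mu}-\bm{\mu}')^\top\mK^{-1}(\bm{\mu}-\bm{\mu}')$ while the displayed one is $\frac{\alpha}{2(\alpha-1)}(\bm{\mu}-\bm{\mu}')^\top\mK^{-1}(\bm{\mu}-\bm{\mu}')$; these disagree for every $\alpha\neq 2$, and for $\alpha>2$ the displayed term is strictly \emph{smaller}, so it cannot even be salvaged as an upper bound. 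Thus your computation, carried out honestly, proves a corrected version of the corollary (weight $(\alpha\mK'+(1-\alpha)\mK)^{-1}$, coefficient $\frac{\alpha}{2}$) rather than the statement as printed. A secondary bookkeeping issue of the same kind: substituting $D_{\alpha}$ into \eqref{eq:renyi_final} yields $\inf_{\alpha>1}\exp\left\{(\alpha-1)(D_{\alpha}-\epsilon)\right\}$, whereas \eqref{eq:cor} is not exponentiated and carries $-\alpha\epsilon$ rather than $-(\alpha-1)\epsilon$; your final ``substitute and take the infimum'' step should state explicitly how these two forms are reconciled, since they are not equal term-by-term.
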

 In the following theorem we 
 provide explicit conditions on the covariance matrix $\bm{\Sigma}_{Tn}$. 
\begin{theorem}
    \label{th:guass}
    Using the notation in corollary \ref{cor:gaus}, we further assume, by Lipchitz continuity of the graph filter, that for some constant $\kappa$ \begin{align}
        \norm{\mH(\mS)-\mH(\mS')}\leq \kappa \Delta_{\mS},
    \end{align}
    and 
    \begin{align}
        \gamma:= \sigma_{\min}(\mH(\mS)) ~~\forall \mS \quad \norm{\mH(\mS)}\leq \Gamma ~~\forall \mS.
    \end{align}
    Then, we define
    \begin{align}
        \omega:=\gamma^2\lambda_{\min}(\bm{\Sigma}_{Tn}) ~~~~ \Omega:=\Gamma^2\lambda_{\max}(\bm{\Sigma}_{Tn})\\
        C_{\alpha}(\bm{\Sigma}_{Tn}):=\Omega^{-1}\left(\alpha-(\alpha-1)\dfrac{\Omega}{\omega}\right),
    \end{align}
    which allows us to write, where $D_{\alpha}$ upper bounds the R\'enyi divergence of $\mS$ vs $\mS'$, 
    \begin{align}
        &D_{\alpha}:=\nonumber\\ 
        &\dfrac{Tn}{2(\alpha-1)}\left[\ln\dfrac{\Omega^{\alpha-1}}{C_{\alpha}(\bm{\Sigma}_{Tn})\omega^{-\alpha}}\right]+\dfrac{\alpha(2\alpha-1)\left(\kappa\Delta_{\mS}\norm{\mM}\right)^2}{2(\alpha-1)\omega^{2}}.\nonumber 
    \end{align}
    Finally we have that if
    \begin{align}
        D_{\alpha}+\dfrac{1}{\alpha-1}\ln\left(\frac{1}{\delta}\right)\leq \epsilon, \label{eq:th_2_final}
    \end{align}
    then the release of $\mY$ is $(\epsilon,\delta)$-DP w.r.t. $\mS$.
\end{theorem}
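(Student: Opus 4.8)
The plan is to convert the Gaussian R\'enyi divergence $D_{\alpha}(P\|Q)$ furnished by Corollary~\ref{cor:gaus} into the closed-form, spectrum-only upper bound $D_{\alpha}$ stated in the theorem, and then push that bound through the Chernoff inequality of Theorem~\ref{th:renyi_dp} to certify \eqref{eq:goal}. Recall that the bracketed expression in \eqref{eq:cor} is exactly $D_{\alpha}(P\|Q)$ for the Gaussian case, split into a log-determinant summand and a Mahalanobis summand; my task is to upper bound each summand by the corresponding term of $D_{\alpha}$, so that $D_{\alpha}(P\|Q)\leq D_{\alpha}$. Since probabilistic-DP implies DP, establishing $\Pr(\gL_{\mS,\mS'}(\cdot)>\epsilon)\leq\delta$ is precisely the required $(\epsilon,\delta)$-DP guarantee.

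The first step is eigenvalue localization for $\mK$ and $\mK'$. Writing $\mH_{\mS}=\mI_T\otimes\mH(\mS)$, the singular values of the Kronecker product are those of $\mH(\mS)$ (each repeated $T$ times), so $\sigma_{\min}(\mH_{\mS})=\gamma$ and $\norm{\mH_{\mS}}\leq\Gamma$. For any unit vector $\vx$ we have $\vx^\top\mK\vx=(\mH_{\mS}^\top\vx)^\top\bm{\Sigma}_{Tn}(\mH_{\mS}^\top\vx)$, and combining $\lambda_{\min}(\bm{\Sigma}_{Tn})\leq\cdots\leq\lambda_{\max}(\bm{\Sigma}_{Tn})$ with $\sigma_{\min}(\mH_{\mS})^2\leq\norm{\mH_{\mS}^\top\vx}^2\leq\norm{\mH_{\mS}}^2$ shows that every eigenvalue of $\mK$ (and identically of $\mK'$) lies in $[\omega,\Omega]$. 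This single containment drives every subsequent estimate.

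Next I bound the two summands of \eqref{eq:cor}. For the log-determinant term I expand it as $-\ln\det\mB_{\alpha}+(\alpha-1)\ln\det\mK'-\alpha\ln\det\mK$ and bound $\det\mK,\det\mK'$ between $\omega^{Tn}$ and $\Omega^{Tn}$, choosing the extreme dictated by each exponent's sign. For $\det\mB_{\alpha}$ I write $\mB_{\alpha}=\alpha\mK^{-1}-(\alpha-1)(\mK')^{-1}$ and apply Weyl's inequality to obtain $\lambda_{\min}(\mB_{\alpha})\geq\alpha/\Omega-(\alpha-1)/\omega=C_{\alpha}(\bm{\Sigma}_{Tn})$, hence $\det\mB_{\alpha}\geq C_{\alpha}(\bm{\Sigma}_{Tn})^{Tn}$; collecting these yields the first summand of $D_{\alpha}$. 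For the Mahalanobis term, the means \eqref{eq:cor_means} differ by $\bm{\mu}-\bm{\mu}'=\V{(\mH(\mS)-\mH(\mS'))\mM}$, so Lipschitz continuity gives $\norm{\bm{\mu}-\bm{\mu}'}\leq\norm{\mH(\mS)-\mH(\mS')}\norm{\mM}\leq\kappa\Delta_{\mS}\norm{\mM}$; combined with the operator-norm estimate $\norm{\mB_{\alpha}}\leq\alpha\norm{\mK^{-1}}+(\alpha-1)\norm{(\mK')^{-1}}\leq(2\alpha-1)/\omega$, this controls the quadratic form and supplies the second summand of $D_{\alpha}$ (I would verify the exact power of $\omega$ carefully against \eqref{eq:cor} at this point).

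Finally I assemble the pieces. The two estimates give $D_{\alpha}(P\|Q)\leq D_{\alpha}$, so for any admissible $\alpha$ the bound of Theorem~\ref{th:renyi_dp} reads $\Pr(\gL_{\mS,\mS'}(\cdot)>\epsilon)\leq\exp\{(\alpha-1)(D_{\alpha}-\epsilon)\}$; requiring the right-hand side to be at most $\delta$ and taking logarithms rearranges exactly to \eqref{eq:th_2_final}, which certifies \eqref{eq:goal} and hence $(\epsilon,\delta)$-DP of the release. I expect the main obstacle to be the admissibility of $\alpha$: the whole argument is vacuous unless $\mB_{\alpha}\succ0$, i.e. unless $C_{\alpha}(\bm{\Sigma}_{Tn})>0$, which forces $\alpha<\Omega/(\Omega-\omega)$ and ties the attainable privacy level to the conditioning ratio $\Omega/\omega$ of the filtered covariance. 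Checking that some feasible $\alpha$ simultaneously satisfies $C_{\alpha}(\bm{\Sigma}_{Tn})>0$ and \eqref{eq:th_2_final} — and confirming that the non-commutativity of $\mK$ and $\mK'$ is genuinely absorbed by the Weyl-type bounds rather than an unstated simultaneous-diagonalization assumption — is the delicate part of the proof.
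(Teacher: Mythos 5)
Your proposal is correct and follows the same overall strategy as the paper's (sketched) proof: localize the spectra of $\mK$ and $\mK'$ in $[\omega,\Omega]$ via $\gamma$, $\Gamma$ and the spectrum of $\bm{\Sigma}_{Tn}$, bound the determinant and mean terms of \eqref{eq:cor}, and push the scalar bound through the Chernoff inequality of Theorem~\ref{th:renyi_dp}, with PDP implying DP. The one genuine difference is the middle step. The paper forms the relative matrix $\mR:=\mK^{1/2}(\mK')^{-1}\mK^{1/2}$, localizes its eigenvalues in $[\omega/\Omega,\Omega/\omega]$, and uses the congruence $\mB_{\alpha}=\mK^{-1/2}\left[\alpha\mI+(1-\alpha)\mR\right]\mK^{-1/2}$, so that $\det\mB_{\alpha}=\det(\mK)^{-1}\det\left[\alpha\mI+(1-\alpha)\mR\right]$ and the $\det(\mK)^{-1}$ factor cancels against part of $(\det\mK)^{\alpha}$; you instead apply Weyl's inequality directly to $\alpha\mK^{-1}-(\alpha-1)(\mK')^{-1}$. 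Both routes yield $\lambda_{\min}(\mB_{\alpha})\geq\alpha/\Omega-(\alpha-1)/\omega=C_{\alpha}(\bm{\Sigma}_{Tn})$ and neither requires simultaneous diagonalization, so the non-commutativity concern you raise at the end is genuinely absorbed in both arguments; yours is the more elementary, the paper's keeps the bookkeeping slightly tighter by working with the generalized eigenvalues of $(\mK,\mK')$ once and for all. Your closing observation that the argument is vacuous unless $C_{\alpha}(\bm{\Sigma}_{Tn})>0$, i.e. $\alpha<\Omega/(\Omega-\omega)$, is also correct and is left implicit in the theorem.

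The two places where your result disagrees with the stated $D_{\alpha}$ should be read as findings, not as gaps in your argument. First, your determinant bookkeeping gives $\frac{Tn}{2(\alpha-1)}\ln\left(\Omega^{\alpha-1}\omega^{-\alpha}/C_{\alpha}(\bm{\Sigma}_{Tn})\right)$, i.e. $\omega^{-\alpha}$ in the numerator, while the theorem puts it in the denominator; the sanity check $\mK=\mK'$, $\bm{\mu}=\bm{\mu}'$ with tight bounds $\omega=\Omega=c$ (so $C_{\alpha}=1/c$) makes the theorem's expression equal to $Tn\,\alpha\ln c$, which is nonzero (and negative for $c<1$) and hence cannot upper bound a R\'enyi divergence that vanishes there, whereas your expression correctly collapses to $0$. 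Second, taking the quadratic form in \eqref{eq:cor} at face value, your estimate $\norm{\mB_{\alpha}}\leq(2\alpha-1)/\omega$ gives a mean term with $\omega^{-1}$, not the theorem's $\omega^{-2}$; the presence of the same $(2\alpha-1)$ factor in the paper suggests the authors performed the identical triangle-inequality step, so the extra power of $\omega$ is most plausibly a typo (or an artifact of the corollary's nonstandard Gaussian-R\'enyi mean term). In short: your proof is sound, reproduces the paper's intended bound up to these two apparent misprints, and the step you flagged for verification is exactly where the paper's statement, not your derivation, needs correction.
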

Due to space constraints we omit the proof and provide a sketch.
\begin{proof}
    The key idea for bounding the R\'enyi divergence is to utilize the spectral bounds assumed on $\mH(\mS)$. That is, we want to bound $\lambda_{\min}(\mK),\lambda_{\max}(\mK)$ in terms of the assumptions and the spectrum of $\bm{\Sigma}_{Tn}$. Then, we construct $\mR:=\mK^{1/2}(\mK')^{-1}\mK^{1/2}$ and  bound $\lambda_{\max}(\mR)$ and $\lambda_{\min}(\mR)$. Next, we can rewrite $\mB_{\alpha}$ in terms of $\mR$ and bound the det and mean terms using the spectral bounds we derived.    
\end{proof}
The key insight from Theorem \ref{th:guass} is the dependence of $\frac{\Omega}{\omega}$. This term is determined by the condition number of $\bm{\Sigma}_{Tn}$ and the spectral bounds of $\mH(\mS)$. The upshot is that one can design $\bm{\Sigma}_{Tn}$ to have the ratio $\frac{\Omega}{\omega}$ close to 1 which does not necessarily depend on the scale of the noise.
This means that if a graph filter is particularly smooth and the input signals have a covariance matrix with a low condition number, then a small $\epsilon$ can be achieved w.r.t $\mS$ for free! However, there is no free lunch. In typical DP scenarios a low condition number might be too restrictive especially when correlation through time is high. However, we can still get privacy for free when we consider non-invertible graph filters.

\subsection{Notes on Singular Matrices}
In the general case, where we have density $f_{\mU}(\cdot;\theta)$ and where $\mH(\mS)$ is singular, we start by taking the thin-SVD of $\mH_{\mS}:=\mI_T\otimes \mH(\mS)$ yielding $\mQ_r\bm{\Sigma}_r\mV_r^{\top}=\mH_{\mS}$. Then, if we let $\mV_0$ denote the remaining eigen vectors, $\mW:=\operatorname{span}(\mQ_r)$, and $\bar{\vy}:=\V{\tilde{\mY}}$, we have via the coarea for a linear map, 
\begin{align}
    &\vp(\bar{\vy}|\mS;\theta)\nonumber\\
     &~~~~=\dfrac{\vone_{\left\{\bar{\vy}\in\mW\right\}}}{\det{\bm{\Sigma}_r}}\int_{\R^{Tn-r}}f_{\mU}\left(\mV_r\bm{\Sigma}_{r}^{-1}\mQ_r^{\top}\bar{\vy}+\mV_0\vw;\theta\right)d\vw.\nonumber
\end{align}
Using this, one can again bound the log-likelihood with the R\'enyi divergence and the existing proofs carry through with the appropriate projections. The more interesting case arises when we assume Gaussianity and $\bm{\Sigma}_{Tn}$ is singular. Here, under specific conditions, we can collapse the divergence to 0! We introduce the following corollary. \begin{corollary}
    Let $\mV:=\operatorname{range}(\bm{\Sigma}_{Tn})$, $\mH_{\mS}=\mI_T\otimes \mH(\mS)$, and $\mH_{\mS'}=\mI_T\otimes \mH(\mS')$. Then, if $\mV\subset \ker\left(\mH_\mS-\mH_{\mS'}\right)$ and $\V{\mM}\in \ker\left(\mH_\mS-\mH_{\mS'}\right)$, then the divergence is 0.
\end{corollary}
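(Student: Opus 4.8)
The plan is to show that the two output laws coincide exactly as probability measures, which forces every R\'enyi divergence (indeed every $f$-divergence) to vanish. First I would introduce the difference operator $\mD:=\mH_{\mS}-\mH_{\mS'}$ and restate the two hypotheses in linear-algebraic form. The condition $\mV\subset\ker(\mH_{\mS}-\mH_{\mS'})$, with $\mV=\operatorname{range}(\bm{\Sigma}_{Tn})$, says precisely that $\mD\vv=\vzero$ for every $\vv$ in the range of $\bm{\Sigma}_{Tn}$, which is equivalent to $\mD\bm{\Sigma}_{Tn}=\vzero$; and the condition $\V{\mM}\in\ker(\mH_{\mS}-\mH_{\mS'})$ reads $\mD\,\V{\mM}=\vzero$. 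These two identities are the only facts I will use.

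Next I would verify that the Gaussian parameters of Corollary~\ref{cor:gaus} collapse. For the means \eqref{eq:cor_means}, the Kronecker identity $\V{\mH(\mS)\mM}=(\mI_T\otimes\mH(\mS))\V{\mM}=\mH_{\mS}\V{\mM}$ gives $\bm{\mu}-\bm{\mu}'=(\mH_{\mS}-\mH_{\mS'})\V{\mM}=\mD\,\V{\mM}=\vzero$. For the covariances \eqref{eq:cor_cov}, the identity $\mD\bm{\Sigma}_{Tn}=\vzero$ yields $\mH_{\mS}\bm{\Sigma}_{Tn}=\mH_{\mS'}\bm{\Sigma}_{Tn}=:\mN$, so that $\mK-\mK'=\mN(\mH_{\mS}^\top-\mH_{\mS'}^\top)=\mN\mD^\top=\mH_{\mS}\bm{\Sigma}_{Tn}\mD^\top$; using symmetry of $\bm{\Sigma}_{Tn}$ to write $\bm{\Sigma}_{Tn}\mD^\top=(\mD\bm{\Sigma}_{Tn})^\top=\vzero$ gives $\mK=\mK'$. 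Thus both the mean vectors and the (singular) covariances match.

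The cleanest way to finish, and the step that needs the most care, is to \emph{avoid} plugging into the closed form of Corollary~\ref{cor:gaus}: that expression presupposes invertible $\mK,\mK'$, whereas here $\bm{\Sigma}_{Tn}$ and hence $\mK,\mK'$ are all singular, so the determinant and Mahalanobis terms are not literally defined. Instead I would argue directly at the level of the pushed-forward random variable. The law of $\V{\tilde{\mU}}\sim\gN(\V{\mM},\bm{\Sigma}_{Tn})$ is supported on the affine set $\V{\mM}+\operatorname{range}(\bm{\Sigma}_{Tn})$, and for any point $\vu=\V{\mM}+\vv$ on it (with $\vv\in\mV$) both hypotheses give $\mH_{\mS}\vu=\mH_{\mS'}\vu$: the mean part cancels because $\mD\,\V{\mM}=\vzero$ and the fluctuation part because $\mD\vv=\vzero$. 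Hence $\mH_{\mS}$ and $\mH_{\mS'}$ agree $\V{\tilde{\mU}}$-almost surely, so the output measures $P$ and $Q$ are \emph{identical} degenerate Gaussians on the common affine support $\mH_{\mS}\V{\mM}+\mH_{\mS}\mV$. For identical measures $D_{\alpha}(P\,\|\,Q)=\frac{1}{\alpha-1}\log\int p^{\alpha}p^{1-\alpha}=0$ for every $\alpha>1$ (equivalently $\gL_{\mS,\mS'}=0$ on the common support, so $\Pr[\gL_{\mS,\mS'}>\epsilon]=0\le\delta$ trivially). Substituting $D_{\alpha}=0$ into \eqref{eq:renyi_final} and \eqref{eq:th_2_final} then certifies privacy for arbitrarily small $\epsilon$.

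I expect the only genuine obstacle to be the singular-covariance bookkeeping: one must confirm that $P$ and $Q$ occupy the \emph{same} low-dimensional support before declaring the divergence zero, rather than mechanically invoking the invertible-case formula. Translating the two kernel conditions into $\mD\bm{\Sigma}_{Tn}=\vzero$ and $\mD\,\V{\mM}=\vzero$ and then passing to the pushforward measure sidesteps this issue entirely, so the remaining manipulations should be routine.
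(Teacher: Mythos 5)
Your proposal is correct. The paper states this corollary without any proof, so there is nothing to compare line by line; but your argument is clearly the intended one, and you execute it with more care than the surrounding text suggests. Translating the two hypotheses into $(\mH_{\mS}-\mH_{\mS'})\bm{\Sigma}_{Tn}=\bm{0}$ and $(\mH_{\mS}-\mH_{\mS'})\V{\mM}=\bm{0}$, concluding $\bm{\mu}=\bm{\mu}'$ and $\mK=\mK'$, and then certifying zero divergence at the level of the pushforward measure is exactly right. Your one point of added value is the observation that you cannot simply substitute into the closed form of Corollary~\ref{cor:gaus}, since $\mK$ and $\mK'$ are singular there and the determinant and inverse terms are undefined; the paper handles this degeneracy by projecting onto the common range $\operatorname{range}(\mH_{\mS}\bm{\Sigma}^{1/2})$ with an orthonormal basis $\mQ$ and reusing the nondegenerate formulas, whereas your almost-sure identity $\mH_{\mS}\vu=\mH_{\mS'}\vu$ on the support $\V{\mM}+\operatorname{range}(\bm{\Sigma}_{Tn})$ reaches the same conclusion without any projection bookkeeping, and in fact gives the stronger statement that every $f$-divergence (not just R\'enyi) vanishes, so $\Pr[\gL_{\mS,\mS'}>\epsilon]=0$ for every $\epsilon>0$.
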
 However, in the situation where $\V{\mM}$ and $\bm{\Sigma}_{Tn}$ do not live in the null space, we must apply a projection on to \eqref{eq:cor_means}-\eqref{eq:cor_cov}. Let $\mW:=\operatorname{range}(\mH_{\mS}\bm{\Sigma}^{1/2})$ and $\mW':=\operatorname{range}(\mH_{\mS'}\bm{\Sigma}^{1/2})$ where $\mW=\mW'$. 
Then if we let $\mQ$ be an orthonormal basis for $\mW$, \eqref{eq:cor_means}-\eqref{eq:cor_cov} become 
\begin{align}
    \bm{\mu}= \mQ^\top\V{\mH(\mS)\mM}~~\mK=\mQ^\top\mH_\mS\bm{\bm{\Sigma}}_{Tn}\mH_{\mS}^\top\mQ,
\end{align}
and the proofs carry through as before.
\begin{figure}
    \centering
    \includegraphics[width=0.8\linewidth]{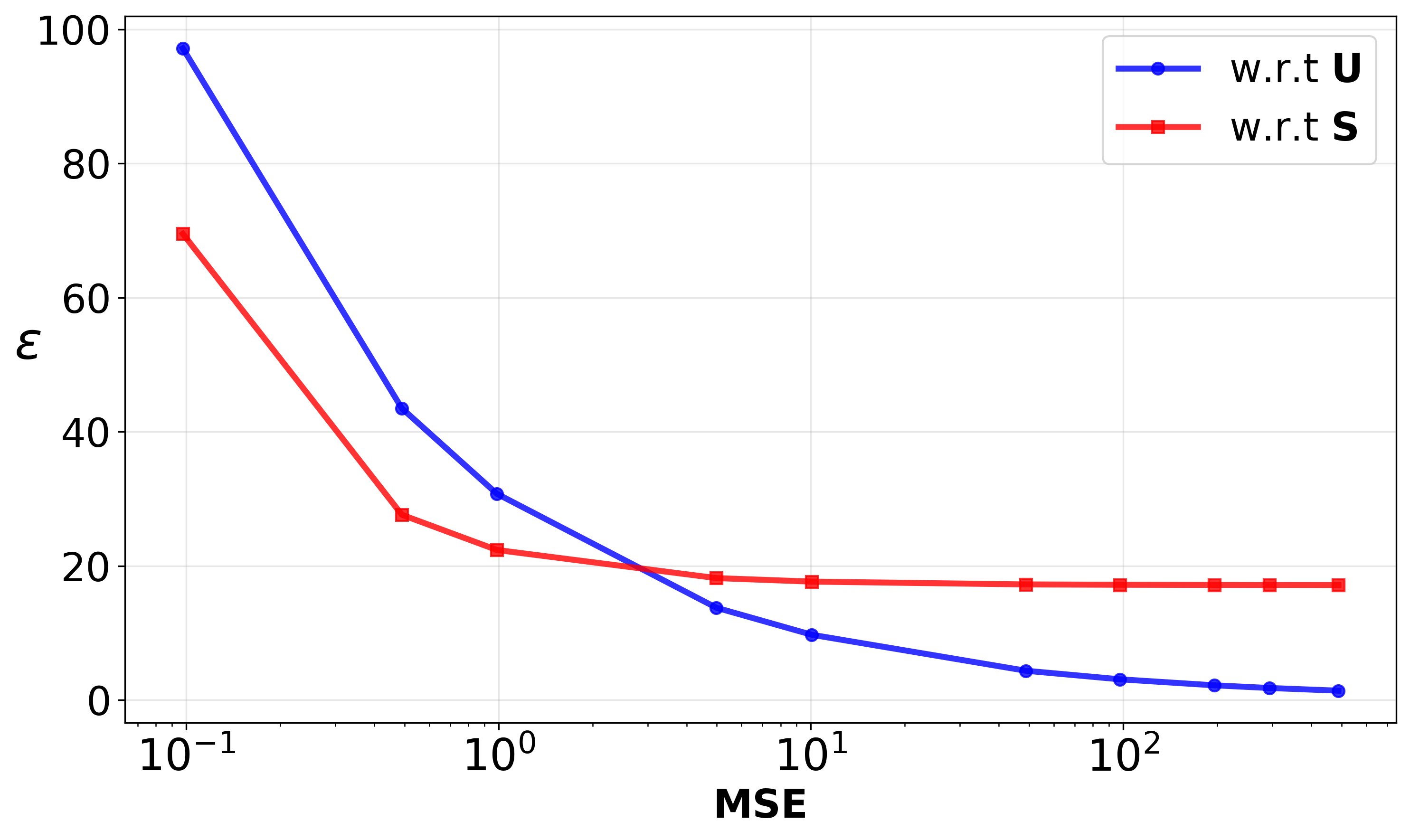}
    \caption{{MSE vs $\epsilon$ w.r.t the GSO and the input.}}
    \label{fig:eps_v_mse}
\end{figure}
\section{Empirical Evaluation}
\label{sec:emperical}
In order to validate the analysis we generated a Erdos-R\'enyi graph and a Gaussian input signal matrix $\tilde{\mU}$ and applied the diffusion-dynamic filter from \cite{LPGP_mag}. Formally,
\begin{align}
    \tilde{\mY}=(\mI-0.01\mS)^{-1}\tilde{\mU},
\end{align} where our graph has 7 vertices and we perform $T=20$ samplings.
We assumed that our input signal was given by some true signal $\mU$ and was made DP using the additive Gaussian mechanism where, $\V{\tilde{\mU}}\sim\gN\left(\V{\mU},\bm{\Sigma}_{Tn}\right)$. In order to generate a time correlated input signal we generated a time covariance matrix $\bm{\Sigma}_T$ via an AR(1) process given by $\bm{\Sigma}_{T}=\sigma^2\rho^{|i-j|}$
where $\rho$ is the correlation and $\sigma$ the noise scale. Then, we formed $\bm{\Sigma}_{Tn}:=\bm{\Sigma}_{T}\otimes \mI_{n}$. Finally we compare the MSE obtained on $\tilde{\mY}$ and compare it to the $\epsilon$ w.r.t $\mS$ and with the $\epsilon$ w.r.t $\mU$. The $\epsilon$ w.r.t $\mS$ is computed using the analytical closed form developed in 
Theorem \ref{th:guass}, while the $\epsilon$ w.r.t $\mU$ is computed using the classical additive gaussian mechanism \cite{dwork2014algorithmic}. The results are presented in Figure  \ref{fig:eps_v_mse}. As can be seen in the figure, the scale of the noise, and thus the MSE for the input signal has a loose dependence. If the input signal has little noise, then there is little privacy for the GSO. However, once there is a small to moderate amount of noise the privacy of the GSO is not directly determined by the scale but rather the structure. In Figure \ref{fig:eps_v_mse} this can be seen where the $\mS$ curve levels out prior to the $\mU$ curve.  
This matches the analytical results which indicate that by carefully designing the covariance structure, the DP of the input signal is enough to protect the GSO.    

\section{Conclusion}
\label{sec:conclusion}
In this paper, we addressed the inverse problem of traditional system identification: protecting the GSO from adversarial inference while maintaining signal utility. We established formal conditions under which DP input signals can guarantee privacy for the GSO. Our main contribution, Theorem \ref{th:guass}, provides explicit bounds relating the privacy parameters $(\epsilon, \delta)$ to the spectral and continuity properties of the graph filter and the spectral properties of the input covariance. The analysis demonstrates that smoother filters with bounded spectrum are better suited for privacy. We can further improve privacy by carefully designing the structure of the covariance matrix to have a low condition number and/or to be indistinguishable with respect to the filter. The key takeaway is that if the graph filter is specially designed and we assume a Gaussian noise mechanism on the input signals, we can effectivity get privacy for ``free" on the graph itself.   

\bibliographystyle{IEEEbib}
\bibliography{strings,refs}

\end{document}